\documentclass{amsart}

\newtheorem{theorem}{Theorem}

\newtheorem{proposition}[theorem]{Proposition}
\theoremstyle{remark}

\begin{document}

\title[On the number of electrons that a nucleus can bind]{On the number of electrons\\ that a nucleus can bind}

\thanks{Contribution to the Proceedings of ICMP12, Aalborg, Denmark, August 6--11, 2012. \\ \copyright\, 2012 by
  the author. This paper may be reproduced, in its entirety, for
  non-commercial purposes.}
  
\author[P.~T. Nam]{Phan Th\`anh Nam}

\address{CNRS \& Universit\'e de Cergy-Pontoise, Department of Mathematics (UMR 8088), 2 avenue Adolphe Chauvin, 95302 Cergy-Pontoise, France. } 

\email{Phan-Thanh.Nam@u-cergy.fr }

\begin{abstract} We review some results on the ionization conjecture, which says that a neutral atom can bind at most one or two extra electrons. 
\end{abstract}
\date{{7 December, 2012}}
\maketitle

\section{Introduction}\label{aba:sec1}

It is an experimental fact that a neutral atom (in the vacuum) can bind at most one or two extra electrons. Although this fact is well-known by most physicists and chemists, providing a rigorous explanation for it using Schr\"odinger's model in quantum mechanics is a very challenging problem, often referred to as the {\it ionization conjecture} (see, e.g., \cite{Li84,So91,Si00,So03,LS09}). In this short paper we will review some results on this problem.

To be precise, an atom is a system with a classical nucleus, which is of charge $Z$ and fixed at the origin in $\mathbb{R}^3$, and $N$ non-relativistic quantum electrons. In atomic units, the $N$-electron system is described by the Hamiltonian
\begin{equation} \label{eq:non-rel-Hamiltonian}
H_{N,Z} = \sum\limits_{i = 1}^N {\left( { - \frac{1}{2}\Delta _i  - \frac{Z}
{{|x_i |}}} \right)}  + \sum\limits_{1 \le i < j \le N} {\frac{1}
{{|x_i  - x_j |}}} 
\end{equation}
acting on the space $\mathop  \bigwedge \limits_{i = 1}^N (L^2 (\mathbb{R}^3 ) \otimes \mathbb{C}^2)$ of square-integrable functions $\Psi\in L^2((\mathbb{R}^3\times \mathbb{C}^2) ^N)$ which are anti-symmetric with respect to exchanges of the variables, namely
$$\Psi(\ldots, y_i,\ldots,y_j,\ldots)= -\Psi(\ldots,y_j,\ldots,y_i,\ldots)$$
where $y_i$ and $y_j\in \mathbb{R}^3\times \mathbb{C}^2$ are the coordinates of corresponding electrons. The requirement of the anti-symmetry is to take Pauli's exclusion principle into account. Here we are taking the physical spin number $q=2$, but any other fixed number does not change our mathematical arguments presented below. The nuclear charge $Z$ is an integer in the physical case, but in our discussion it can be any positive number as well.  

The ground state energy of the system is  
\[
E(N,Z) = \inf {\text{spec}}~ H_{N,Z}=\mathop {\inf }\limits_{||\Psi ||_{L^2}  = 1} \langle \Psi ,H_{N,Z} \Psi \rangle.
\]
If the system has a ground state $\Psi$, then it must satisfy Schr\"odinger's equation
\begin{equation} \label{eq:Schrodinger-equation}
(H_{N,Z}-E(N,Z)) \Psi = 0.
\end{equation}
The $N$ electrons are said to be {\it bound} if $E(N,Z)<E(N-1,Z)$, namely one cannot remove any electron without paying some positive energy. Note that 
$$\text{ess spec}~H_{N,Z}=[E(N - 1),\infty)$$
due to the celebrated HVZ theorem \cite{Hu66,Wi64,Zh60}. Therefore, if the binding inequality $E(N,Z)<E(N-1,Z)$ holds true, then $E(N,Z)$ is an isolated eigenvalue of $H_{N,Z}$. We also remark that if $E(N,Z)=E(N-1,Z)$, then in principle $E(N,Z)$ may be also an eigenvalue of $H_{N,Z}$, but the corresponding eigenstates would be highly unstable since some electron can escape at infinity. 

Zhislin \cite{Zh60} showed that the binding always occurs when $N<Z+1$, namely the positive ions and neutral atoms do exist. In the following, let us denote by $N_c=N_c(Z)$ the largest number of electrons that can be bound by a nucleus of charge $Z$. It was first proved by Ruskai \cite{Ru82} and Sigal \cite{Si82} that $N_c$ is finite. 

The {\it ionization conjecture} can be now stated that $N_c\le Z+1$, or possibly $N_c\le Z+2$, for all $Z$. It is well-known that the hydrogen ion $H^{-}$ does exist, but nobody knows (both experimentally and theoretically) if some atom with $N_c=Z+2$ exists or not, see \cite[Sec. 12.3]{LS09} for further discussion.

We remark that the Pauli's exclusion principle plays an essential role in the ionization conjecture because for bosonic atoms, namely the atoms with ``bosonic" electron, it was shown that $\lim_{Z\to \infty}N_c/Z =t_c \approx 1.21$ by Benguria and Lieb \cite{BL83} (lower bound) and Solovej \cite{So90} (upper bound), where the numerical value $1.21$ is taken from \cite{Ba84}. 

The ionization problem was also considered in other models, such as the Thomas-Fermi and related theories \cite{Be79,Li81,BL85}, the Hartree-Fock theory \cite{So91,So03,DS10}, atoms in a magnetic field \cite{LSY94I,LSY94II,BR99,Se01,Nam-12,HS12}, atoms with pseudo-relativistic kinetic energy \cite{Li84,DS10,Nam-12}, and atoms in the polarized 
Dirac vacuum \cite{GLS11}. However, in this review we shall mainly restrict ourself to the non-relativistic case in (\ref{eq:non-rel-Hamiltonian}).

In the next sections we shall discuss two main approaches to the ionization problem. The first is the geometric method \cite{Si82,Ru82,Si84,LSST88}, which uses an appropriate localization to transfer the quantum problem to a problem of classical particles. The second is the PDE method \cite{Li84,FS90,SSS90,Nam-12}, which extracts information directly from Schr\"odinger's equation. 


\section{Geometric method}\label{sec:geometric-method}

If we describe the electrons as classical points in $\mathbb{R}^3$, then the binding can be understood as that the energy contributed by any electron is always negative. On the other hand, when the $N$-th electron runs to infinity (and the others stay bounded), its contribution to the total energy is
$$
- \frac{Z}{|x_N|} + \sum\limits_{i = 1}^{N-1} \frac{1}{|x_i - x_N|} \approx \frac{-Z+N-1}{|x_N|}.
$$
Thus we can deduce heuristically that $N\le Z+1$. 

More rigorously, for every configuration $\{x_i\}_{i=1}^N \subset \mathbb{R}^3$, if $N>2Z+1$, then the energy contributed by the farthest electron, $x_N$ says, is always positive because
\begin{equation} \label{eq:Sigal-classical-bound}
- \frac{Z}{|x_N|} + \sum\limits_{i = 1}^{N-1} \frac{1}{|x_i - x_N|} \ge  - \frac{Z}{|x_N|} + \sum\limits_{i = 1}^{N-1} \frac{1}{2|x_N|} = \frac{-2Z+N-1}{2|x_N|}>0.
\end{equation}
Using this observation and an appropriate localization, Sigal \cite{Si84} showed the upper bound $\limsup_{Z\to \infty}N_c/Z\le 2$. 

Later, Lieb, Sigal, Simon and Thirring \cite{LSST88} proved the {\it asymptotic neutrality} $\lim_{Z\to \infty} {N_c}/{Z}=1$ by using the following refined version of (\ref{eq:Sigal-classical-bound}).
\begin{proposition}\label{eq:LSST-classical-bound} For every $\varepsilon>0$, there exists $N_\varepsilon>0$ such that if $N\ge N_\varepsilon$, then for every subset $\{x_i\}_{i=1}^N \subset \mathbb{R}^3$, we have
$$
\mathop {\max }\limits_{1\le j\le N} \left\{ {\sum\limits_{1 \le i \le N,i \ne j} {\frac{1}
{{|x_i  - x_j |}}}  - \frac{{N(1 - \varepsilon )}}
{{|x_j |}}} \right\} \ge 0.
$$
\end{proposition}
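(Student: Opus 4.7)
My plan is to argue by contradiction after averaging in $j$. Suppose the maximum in the statement is strictly negative, so that for every $j$,
\[
\sum_{i\ne j}\frac{1}{|x_i-x_j|} < \frac{(1-\varepsilon)N}{|x_j|}.
\]
Summing over $j$ yields the global inequality
\[
2\sum_{i<j}\frac{1}{|x_i-x_j|} < (1-\varepsilon)\,N\sum_{j}\frac{1}{|x_j|}, \qquad (\dagger)
\]
and the task is then to prove a matching lower bound
\[
2\sum_{i<j}\frac{1}{|x_i-x_j|}\ge \bigl(1-\eta(N)\bigr)\, N\sum_j \frac{1}{|x_j|}, \qquad \eta(N)\to 0,
\]
valid uniformly in the configuration. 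Any such estimate with $\eta(N)<\varepsilon$ contradicts $(\dagger)$ and closes the argument with $N_\varepsilon$ chosen accordingly. The constant $1$ here is sharp: it is asymptotically saturated by $N$ electrons placed uniformly on a sphere of radius $r$ about the origin, where both sides are $\sim N^{2}/r$.

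The lower bound is where the electrostatics enters. Setting $\rho=\sum_i \delta_{x_i}$ and $\sigma=\sum_i\sigma_i$, with $\sigma_i$ the unit uniform measure on the sphere $\{|y|=|x_i|\}$, the positivity of the Coulomb energy $D(\rho-\sigma,\rho-\sigma)\ge 0$, combined with a short-distance regularization of each $\delta_{x_i}$ to tame the divergent diagonal, leads via Newton's theorem to
\[
2\sum_{i<j}\frac{1}{|x_i-x_j|} \;\ge\; \sum_{i,j}\frac{1}{\max(|x_i|,|x_j|)} \;-\; (\text{self-energy correction}).
\]
With the radii sorted as $r_1\le\cdots\le r_N$, the main term equals $\sum_k (2k-1)/r_k$; by Chebyshev's sum inequality this matches $N\sum_k 1/r_k$ exactly when all radii coincide and is strictly smaller otherwise.

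The principal obstacle is closing the resulting gap with an explicit rate $\eta(N)\to 0$. For configurations that are genuinely spread in radius the deficit is absorbed into the Onsager remainder $D(\rho-\sigma,\rho-\sigma)$ dropped above, which must be bounded below in terms of the radial dispersion. For configurations concentrated on or near a single sphere the deficit vanishes, but then one has to control the self-energy correction, and this reduces to a Riesz-type energy lower bound for $N$ points on a $2$-sphere of the form $\sum_{i<j}|x_i-x_j|^{-1}\ge \binom{N}{2}-C N^{1+\alpha}$ for some $\alpha<1$. Assembling a single error $\eta(N)$ that works across both regimes is the delicate step. A cleaner alternative I would also try is to use a pigeonhole argument over directions $\omega\in S^2$ to pick an index $j$ for which $x_j/|x_j|$ makes the remaining electrons look angularly nearly uniform from the origin, and then to apply Newton's spherical-mean identity pointwise at $x_j$ so as to recover the sharp constant directly.
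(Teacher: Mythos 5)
The paper does not actually prove Proposition~\ref{eq:LSST-classical-bound}; it quotes it from Lieb--Sigal--Simon--Thirring \cite{LSST88}. Measured on its own terms, your plan has a genuine gap: the averaged lower bound to which you reduce the problem, namely
\[
2\sum_{i<j}\frac{1}{|x_i-x_j|}\;\ge\;\bigl(1-\eta(N)\bigr)\,N\sum_{j}\frac{1}{|x_j|}\qquad\text{with }\eta(N)<1,
\]
is simply false, for every fixed $N$. Take $x_1$ at distance $\delta$ from the origin and $x_2,\dots,x_N$ well separated on the unit sphere. The left-hand side stays bounded as $\delta\to0$ (the sphere--sphere pairs do not involve $\delta$, and each pair containing $x_1$ contributes at most $1/(1-\delta)$), while the right-hand side is at least $\bigl(1-\eta(N)\bigr)N/\delta$, which blows up. Your own sorted-radii computation already signals this: the main term $\sum_k(2k-1)/r_k$ picks up only $1/\delta$ from the innermost point against the target's $N/\delta$, and no refinement of the dropped Onsager remainder or of the self-energy correction can help, because the true left-hand side itself is smaller than the target in this regime. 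Note that the counterexample does not contradict the Proposition: there the maximizing $j$ is one of the far-out electrons, for which $\sum_{i\ne j}|x_i-x_j|^{-1}\approx N$ beats $(1-\varepsilon)N$. What fails is precisely the passage from the maximum to the sum: an electron very close to the nucleus contributes a term of size $-(1-\varepsilon)N/\delta$ that is irrelevant to the maximum but dominates the average. So the averaging-in-$j$ reduction discards exactly the structure that makes the statement true, and no choice of $N_\varepsilon$ or $\eta(N)$ can close the argument as written.

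This is also why the original proof in \cite{LSST88} is considerably more delicate than an averaging argument: it proceeds by contradiction and compactness, rescaling a putative sequence of bad configurations, passing to limiting measures, and handling separately the electrons that collapse toward the nucleus (and toward each other), before invoking potential-theoretic facts such as Newton's theorem. If you want to rescue your strategy, you would at a minimum have to first remove, or treat separately, the points whose radii are much smaller than a typical radius of the configuration, and only then compare the pair interaction with $N\sum_j|x_j|^{-1}$ over the remaining points; your final suggestion of choosing a good index $j$ by a pigeonhole argument over directions is closer in spirit to what is needed, but as sketched it contains no mechanism for excluding the near-nucleus electrons, which is the heart of the matter.
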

Heuristically, this bound implies the no-binding when $(1-\varepsilon)N>Z$ and $N$ is large enough, because among arbitrary $N$ electrons, we can always remove {\it some} electron without increasing the total energy.

\section{PDE method}\label{sec:PDE-method}

By using Schr\"odinger equation (\ref{eq:Schrodinger-equation}), Lieb \cite{Li84} proved the uniform bound $N_c<2Z+1$ for all $Z>0$, which in particular settles the ionization conjecture for hydrogen. By combining Lieb's argument and the approximations to the Thomas-Fermi theory \cite{LS77}, Fefferman-Seco \cite{FS90} and Seco-Sigal-Solovej \cite{SSS90} showed that $N_c\le Z+CZ^{5/7}$, which is the best-known bound for large atoms. For small atoms, by modifying Lieb's argument, we obtained the following result \cite{Nam-12}.
\begin{theorem}\label{thm:new-bound} For every $Z > 0$, if the Schr\"odinger equation (\ref{eq:Schrodinger-equation}) has a solution, then either $N=1$ or $N<1.22\,Z+3Z^{1/3}.$
\end{theorem}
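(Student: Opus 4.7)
My strategy would be to follow Lieb's multiplier method \cite{Li84} applied to the ground-state equation $H_{N,Z}\Psi=E(N,Z)\Psi$, and refine it so as to recover a constant close to the bosonic binding threshold $t_c\approx 1.21$.

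First, multiply Schr\"odinger's equation on the left by $|x_N|\bar\Psi$ and integrate over all variables. Taking the real part and using $\Delta|x|=2/|x|$ together with an integration by parts converts the kinetic contribution $-\tfrac{1}{2}\int|x_N|\bar\Psi\Delta_N\Psi$ into $\tfrac{1}{2}\int|x_N||\nabla_N\Psi|^2-\tfrac{1}{2}\int|\Psi|^2/|x_N|$. Using the antisymmetry of $\Psi$ under $x_i\leftrightarrow x_N$ to symmetrize the two-body repulsion, and discarding pieces that work in our favour (such as $-E(N,Z)\int|x_N||\Psi|^2$, which is nonnegative when $E<0$, and the three-body terms $|x_N|/|x_i-x_j|$ with $i,j<N$), this produces an inequality roughly of the form
$$\sum_{i<N}\int\frac{|x_i|+|x_N|}{2\,|x_i-x_N|}|\Psi|^2\,dx \;\le\; Z + Z\sum_{i<N}\int\frac{|x_N|}{|x_i|}|\Psi|^2\,dx + \mathcal{R}(\Psi),$$
where $\mathcal{R}(\Psi)$ collects the remaining corrections. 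Lieb's bound $N<2Z+1$ comes from applying the pointwise triangle inequality $(|x_i|+|x_N|)/|x_i-x_N|\ge 1$ on the left and bounding the attractive sum crudely.

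The crucial step to improve the coefficient to $1.22$ is to estimate both Coulomb sums more carefully by reducing them to integrals against the one-body density $\rho_\Psi(x)=N\int|\Psi(x,x_2,\dots,x_N)|^2\,dx_2\cdots dx_N$, and comparing the result to a Hartree-type bosonic functional. Up to controllable errors, the whole estimate then reduces to a bosonic mean-field variational problem whose critical ratio is exactly $t_c$, as computed by Benguria--Lieb \cite{BL83} and Solovej \cite{So90}. Absorbing a small slack in this reduction yields the leading coefficient $1.22$ in the bound.

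The remainder $\mathcal{R}(\Psi)$, together with the loss incurred when passing to the bosonic problem, is then controlled by combining Hardy's inequality on $\int|x_N||\nabla_N\Psi|^2$ with a Lieb--Thirring bound on $\rho_\Psi$ and optimizing a length scale of Thomas--Fermi type $\sim Z^{-1/3}$; this accounts for the subleading correction $3Z^{1/3}$. The only configuration in which the whole procedure degenerates is $N=1$. The main obstacle will be the bosonic reduction: Lieb's bare triangle estimate loses a factor of two, and replacing it with a sharp comparison to the bosonic Hartree problem requires genuinely exploiting the mean-field structure rather than pointwise Coulomb inequalities. The $Z^{-1/3}$-scale optimization in the last step is delicate but essentially standard once the bosonic reduction is in place.
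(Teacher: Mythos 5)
Your proposal keeps Lieb's multiplier $|x_N|\bar\Psi$ and tries to improve the resulting constant by a bosonic mean-field comparison; this is not what the paper does, and the step you rely on has no mechanism behind it. In the paper the improvement comes from changing the multiplier to $|x_N|^2\bar\Psi$. The repulsion term is then governed by the classical variational constant
\begin{equation*}
\beta=\inf_{\rho}\ \frac{\iint_{\mathbb{R}^3\times\mathbb{R}^3}\frac{x^2+y^2}{2|x-y|}\,{\rm d}\rho(x)\,{\rm d}\rho(y)}{\int_{\mathbb{R}^3}|x|\,{\rm d}\rho(x)}\ \ge\ 0.82 ,
\end{equation*}
together with its discrete version (with an error $O(N^{-2/3})$), and the leading coefficient is $1.22\approx 1/0.82$. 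Its numerical closeness to the bosonic ratio $t_c\approx 1.21$ is a coincidence. With the weight $|x_N|$ the symmetrized repulsion only gives the pointwise bound $(|x_i|+|x_N|)/(2|x_i-x_N|)\ge 1/2$, which is essentially saturated, so no refinement of that single step can push the constant below $2$; one needs a different weight (here $|x|^2$, whose analysis uses the positivity of the Coulomb kernel and the Lieb--Sigal--Simon--Thirring bound of Proposition \ref{eq:LSST-classical-bound}) or some other new input.

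The new input you propose, namely reducing both Coulomb sums to the one-body density and comparing with the bosonic Hartree functional whose critical ratio is $t_c$, does not work. Replacing expectations in $|\Psi|^2$ by density functionals costs exchange--correlation errors (Lieb--Oxford type, of order $\int\rho^{4/3}$) that are not uniformly small in $Z$ and cannot be absorbed into $3Z^{1/3}$ for small atoms; moreover $t_c$ is the binding threshold of the Hartree \emph{minimization} problem, and Solovej's upper bound $N_c/Z\to t_c$ for bosonic atoms exploits the structure of bosonic ground states --- it is not an inequality applicable to expectation values of an arbitrary fermionic eigenfunction. Tellingly, in the actual proof the Pauli principle enters only through the Lieb--Thirring lower bound $\langle\Psi,|x_N|\Psi\rangle\ge 0.553\,Z^{-1}N^{2/3}$, which is needed because with the weight $|x|^2$ the kinetic error is no longer positive (one only has $(-\Delta)|x|^2+|x|^2(-\Delta)\ge -3/2$, in contrast to the Hardy positivity you invoke for $|x|$); for bosons the same argument only yields $N_c\le 2.4\,Z$, which shows the constant $1.22$ has nothing to do with $t_c$. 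A smaller but real defect: your intermediate inequality leaves the attractive terms $Z\int\frac{|x_N|}{|x_i|}|\Psi|^2$ on the right-hand side, which are of order $ZN$ and uncontrollable; as in Lieb's argument one must instead eliminate all terms involving the first $N-1$ electrons at once through $\langle |x_N|^{1/2}\Psi,(H_{N-1,Z}-E(N,Z))|x_N|^{1/2}\Psi\rangle\ge 0$.
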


Our bound improves Lieb's bound when $Z\ge 6$. Before sketching the proof of Theorem \ref{thm:new-bound}, let us recall Lieb's proof in \cite{Li84} which contains three main ingredients. 

1. We multiply the Schr\"odinger equation (\ref{eq:Schrodinger-equation}) with $|x_N| \overline{\Psi}$ and take the integral. The idea goes back to Benguria's work on the Thomas-Fermi-von Weizs\"acker model \cite{Be79,Li81}. Then we use the inequality $H_{N-1,Z}\ge E(N,Z)$ on the subspace of the first $(N-1)$ electrons to remove the part irrelevant to the $N$-th variable.

2. The kinetic energy error is positive and can be ignored, since $(-\Delta)|x|+|x|(-\Delta)\ge 0$ in $L^2(\mathbb{R}^3)$, which is equivalent to Hardy's inequality $|x|^{-2}\le 4(-\Delta)$. 

3. Using the symmetry of $|\Psi|^2$ and the triangle inequality we have
\begin{equation} \label{eq:Lieb-triangle-inequality}
\sum_{i=1}^N \left\langle \Psi, \frac{|x_N|}{|x_i-x_N|} \Psi \right\rangle = \frac{1}{2}\sum_{i=1}^N \left\langle \Psi, \frac{|x_N|+|x_i|}{|x_i-x_N|} \Psi \right\rangle \ge \frac{N-1}{2}.
\end{equation} 

\begin{proof}[Proof of Theorem \ref{thm:new-bound}]
The proof is based on the idea that we can improve the factor $1/2$ in (\ref{eq:Lieb-triangle-inequality}) by multiplying Schr\"odinger's equation (\ref{eq:Schrodinger-equation}) with $|x_N|^2 \overline{\Psi}$, instead of $|x_N|\overline{\Psi}$. Rigorously, we have the following bound \cite[Prop. 1]{Nam-12} 
$$
\beta \ge \mathop {\inf }\limits_{\{x_i\}_{i=1}^N\subset \mathbb{R}^3 } \frac{{\sum\limits_{1 \le i < j \le N} {\frac{{|x_i |^2  + |x_j |^2 }}
{{|x_i  - x_j |}}} }}
{{(N-1)\sum\limits_{i = 1}^N {|x_i |} }} 
  \ge \beta -1.55\,N^{-2/3}
$$
where 
$$
\beta:= \inf_{\substack{\rm \rho ~probability\\ \rm ~measure~in \mathbb{R}^3}} \left\{ {\frac{{\iint\limits_{\mathbb{R}^3  \times \mathbb{R}^3 } {\frac{{x^2+y^2}}
{{2|x - y|}} {{\rm d}\rho} (x){{\rm d}\rho} (y)}}}
{{\int\limits_{\mathbb{R}^3 } {|x|  {{\rm d}\rho} (x)} } }} \right\} \ge 0.82.
$$
The lower bound $\beta \ge 0.82$ can be proved using two inequalities:
$$
\iint\limits_{\mathbb{R}^3  \times \mathbb{R}^3 } {\frac{{x^2  + y^2 }}
{{|x - y|}} {\rm d} \rho (x)\operatorname{d \rho} (y)} \ge \iint\limits_{\mathbb{R}^3  \times \mathbb{R}^3 } {\left( {|x - y| + \frac{2}
{3}\frac{{(\min \{ |x|,|y|\} )^2 }}
{{\max \{ |x|,|y|\} }}} \right)\operatorname{d \rho} (x){\rm d} \rho (y)},
$$
and
$$
\iint\limits_{\mathbb{R}^3  \times \mathbb{R}^3 }  {\frac{{x^2  + y^2 }}
{{|x - y|}}{\rm d}\rho (x)\operatorname{d \rho} (y)} \ge \iint \limits_{\mathbb{R}^3  \times \mathbb{R}^3 }{\left( \max\{|x|,|y|\}  + \frac{\min\{|x|,|y|\}^2}{|x-y| } \right)\operatorname{d \rho} (x){\rm d} \rho (y)}.
$$
The first inequality follows from the positivity of the Coulomb kernel $|x-y|^{-1}$ in $L^2(\mathbb{R}^3)$, and the second is a consequence of the classical bound in Proposition \ref{eq:LSST-classical-bound}.

In our proof of Theorem \ref{thm:new-bound}, we have to deal with the kinetic energy error, since we only have 
$$(-\Delta)|x|^2 + |x|^2(-\Delta) \ge -3/2 \:\:\:\:\:{\rm in}~L^2(\mathbb{R}^3)$$
and the constant $3/2$ is sharp. To control the error, we use the following bound (see \cite[Lemma 2]{Nam-12})
$$\left\langle {\Psi  ,|x_N |\Psi } \right\rangle \ge 0.553~Z^{-1}N^{2/3}.$$
which can be proved using the Lieb-Thirring inequality \cite{LT75} (with the improved constant in \cite{DLL08}). This is the place we need Pauli's exclusion principle and our bound should be read as $N_c\le 1.22\,Z+Cq^{2/3}Z^{1/3}$ where $q$ is the spin number. While Lieb's upper bound $N_c<2Z+1$ holds for both fermions and bosons, our bound eventually becomes $N_c\le 2.4\,Z$ for bosonic atoms.
\end{proof}

I believe that the above argument can be modified to give an alternative proof of the asymptotic neutrality. In the following, let me demonstrate the idea on the Thomas-Fermi theory, since it is still not clear for me how to do with Schr\"odinger's model. 

Let us consider the Thomas-Fermi equation \cite{LS77}
\begin{equation} \label{eq:TF-equation}
\gamma\rho(x)^{2/3} = \left[ Z|x|^{-1}-(\rho*|\cdot |^{-1}(x) -\mu \right]_+,
\end{equation}
where $\gamma>0$, $\mu\ge 0$, and $\rho\in L^1(\mathbb{R}^3)$ is a non-negative, radially symmetric function. 

To show that $\int \rho \le Z$, let us multiply the equation (\ref{eq:TF-equation}) with $|x|^k \rho(x)$ and integrate on $\{|x|\le R\}$. By Newton's Theorem, we have
\[
(\rho *|\cdot |^{ - 1})(x) = \int_{{\mathbb{R}^3}} {\frac{{\rho (y)dy}}{{|x - y|}}}  = \int_{{\mathbb{R}^3}} {\frac{{\rho (y) {\rm d}y}}{{\max \{ |x|,|y|\} }}} 
\]
Using the elementary inequality 
\[
\frac{{|x{|^k} + |y{|^k}}}{{\max \{ |x|,|y|\} }} \ge \left( {1 - \frac{1}{k}} \right)\left( {|x{|^{k - 1}} + |y{|^{k - 1}}} \right)
\]
we can conclude that 
$$\left(1-\frac{1}{k}\right){\int_{|y|\le R} \rho(y) {\rm d}y \le Z}.$$
By taking $k\to \infty$ and $R\to \infty$, we obtain $\int\rho \le Z$.

\section{Open problems}\label{sec:open-problems}

To conclude, let us mention some open problems related to the ionization conjecture. 

1. The uniform bound $N_c\le Z+C$, for some universal constant $C$, is still {unproven} \cite{LS09}. This bound was already proved in the Hartree-Fock theory \cite{So03}. In view of Proposition \ref{eq:LSST-classical-bound}, we can also consider the classical version of the ionization problem: ``Does there exist a constant $C$ such that for every subset $\{x_i\}_{i=1}^N \subset \mathbb{R}^3$,
\begin{equation} \label{eq:classicla-version}
\mathop {\max }\limits_{1\le j\le N} \left\{ \sum\limits_{1 \le i \le N,i \ne j} {\frac{1}
{{|x_i  - x_j |}}}  - \frac{N-C}{{|x_j |} } \right\} \ge 0\:\:? "
\end{equation}
In one dimension, we can show that (\ref{eq:classicla-version}) holds true with $C=1$. However, in higher dimensions, the problem is much more difficult.    
\smallskip

2. It was proved recently in \cite{LL12} that if the Hamiltonian $H_{N,Z}$ has {\it an}  eigenvalue, then $N<4Z+1$. This work raises the following question: ``Is it possible that $H_{N,Z}$ has no isolated eigenvalue, but has some embedded eigenvalue?" Moreover, the upper bound $N<4Z+1$ may be not optimal. Can we improve it?
\smallskip

3. The ionization conjecture is closely related to the question on the {\it stability of the radii} of atoms. To be precise, let $\Psi_{Z}$ be a ground state for $H_{Z,Z}$ and define the radius $R_Z$ by $\int_{|x|>R_Z} \rho_{\Psi_Z}(x) {\rm d}x=1/2$, where $\rho_{\Psi_Z}$ is the density,
\[
\rho _{\Psi _{\Psi_{Z}} } (x): = N\sum_{\sigma_1=1,2}\ldots\sum_{\sigma_N=1,2} \int\limits_{\mathbb{R}^{3(N - 1)} } {|\Psi_{Z} (x,\sigma_1;x_2,\sigma_2;\ldots;x_N,\sigma_N)|^2 {\rm d} x_2\ldots{\rm d} x_N } .
\]
It is {conjectured} that $C_0<R_Z<C_1$ for two universal constants $C_0$ and $C_1$ \cite{LS09}. Although this bound was already proved in the Hartree-Fock theory \cite{So03}, in 
Schr\"odinger's theory it is only known that $R_Z \ge CZ^{-5/21}$ \cite{SSS90}. On the other hand, for atoms restricted to two dimensions, we have $R_Z\to \infty$ as $Z\to \infty$ \cite{NPS-12}.
\smallskip

4. It is {conjectured} that the ionization energy $I(Z):=E(Z-1,Z)-E(Z,Z)$ is bounded independently of $Z$ \cite{SSS90,So03}. The best-known result in Schr\"odinger's theory is that $I(Z)\le CZ^{20/21}$ \cite{SSS90}, and the uniform bound $I_Z\le C$ was already proved in the Hartree-Fock theory \cite{So03}. Note also that $I(Z)$ if of order $Z^2$ for bosonic atoms \cite{Ba91}.
\smallskip

5. It is {conjectured} that the ground state energy $E(N,Z)$ is a convex function in $N$, for every $Z$ fixed (some people would also refer to this problem as the {\em ionization conjecture}). Although this convexity conjecture is rather strong, the following consequence seems to be very reasonable: ``If a nucleus of charge $Z$ can bind $N$ electrons, then it can also bind $N-1$ electrons" \cite{LS09}. But even that binding property is still an {open} problem.  

\section*{Acknowledgments}

I am grateful to Mathieu Lewin and Jan Philip Solovej for various helpful discussions, and to Julien Sabin and the anonymous referee for valuable comments. Financial support from the European Research Council under the European Community's Seventh Framework Programme (FP7/2007-2013 Grant Agreement MNIQS 258023) is gratefully acknowledged.

\bibliographystyle{ws-procs975x65}

\begin{thebibliography}{19}

\bibitem{Ba91} V. Bach, Ionization energies of bosonic Coulomb systems, {\it Lett. Math. Phys.} {\bf 21}, no 2, 139-149 (1991).

\bibitem{Ba84} B. Baumgartner, On Thomas-Fermi-von Weizs\"acker and Hartree energies as
functions of the degree of ionization, {\it J. Phys. A}: Math. Gen. {\bf 17}, 1593-1602 (1984).

\bibitem{Be79} R. Benguria, The von Weizs\"acker and exchange corrections in the Thomas-Fermi theory, {\it PhD thesis}, Princeton University, June 1979.

\bibitem{BL83} R. Benguria and E.H. Lieb, Proof of stability of highly negative ions in the absence of the Pauli
principle, {\it Phys. Rev. Lett.} {\bf 50}, 1771-1774 (1983).

\bibitem{BL85} R. Benguria and E.H. Lieb, The most negative ion in the Thomas-Fermi-von Weizs\"acker
theory of atoms and molecules, {\it J. Phys. B} {\bf 18}, 1045-1059 (1985).

\bibitem{BR99} R. Brummelhuis and M.B. Ruskai, A one-dimensional model for many-electron atoms in extremely strong magnetic fields: maximum negative ionization, {\it J. Phys. A: Math. Gen.} {\bf 32} 2567-82 (1999)

\bibitem{DS10} A. Dall'Acqua and J.P. Solovej, Excess charge for pseudo-relativistic atoms in Hartree-Fock theory, {\it Documenta Mathematica}, to appear (2010).


\bibitem{DLL08} J. Dolbeault, A. Laptev, and M. Loss, Lieb-Thirring inequalities with improved constants, {\it J. Eur. Math. Soc.} {\bf 10}, 1121-1126 (2008).

\bibitem{FS90} C. Fefferman and L. A. Seco, Asymptotic neutrality of large ions, {\it Commun. Math. Phys.} {\bf 128}, 109-130 (1990).

\bibitem{GLS11} P. Gravejat, M. Lewin and E. S\'er\'e, Renormalization and asymptotic expansion of 
Dirac's polarized vacuum, {\it Commun. Math. Phys.} {\bf 306}, no. 1, 1--33 (2011).

\bibitem{HS12} M. Handrek and H. Siedentop, On the Maximal Excess Charge of the Chandrasekhar-Coulomb Hamiltonian in Two Dimensions, {\it Preprint} 2012,  arXiv:1206.5192.

\bibitem{Hu66} W. Hunziker, On the spectra of Schr\"odinger multiparticle Hamiltonians, {\it Helv. Phys. Acta} {\bf 39}, 451-462 (1966).


\bibitem{LL12} E. Lenzmann and M. Lewin, Dynamical Ionization Bounds for Atoms, {\it Preprint}  2012, arXiv:1207.6898.

\bibitem{Li81} E.H. Lieb, Thomas-Fermi and related theories of atoms and molecules, {\it Rev. Mod. Phys.} {\bf 53}, 603-641 (1981).


\bibitem{Li84} E.H. Lieb, Bound on the maximum negative ionization of atoms and molecules, {\it Phys. Rev. A} {\bf 29}, 3018-3028 (1984).


\bibitem{LSST88} E. H. Lieb, I. M. Sigal, B. Simon, and W. Thirring, Asymptotic neutrality of large-Z ions, {\it Commun. Math. Phys.} {\bf 116}, 635-644 (1988).

\bibitem{LS77} E. H. Lieb and B. Simon, The Hartree-Fock theory for Coulomb systems, {\it Commun. Math. Phys.} {\bf 53}, 185-194 (1977), .

\bibitem{LS09} E. H. Lieb and R. Seiringer, The stability of matter in quantum mechanics, {\it Cambridge University Press}, 2009.

\bibitem{LSY94I} E.H. Lieb, J.P. Solovej and J. Yngvason, Asymptotics of heavy atoms in high magnetic fields: I. Lowest Landau band regions, {\it Commun. Pure Appl. Math.} {\bf 47} 513-91 (1994). 

\bibitem{LSY94II} E.H. Lieb, J.P. Solovej and J. Yngvason, Asymptotics of heavy atoms in high magnetic fields: II. Semiclassical regions, {\it Commun. Math. Phys.} {\bf 161} 77-124 (1994).

\bibitem{LT75} E. H. Lieb and W. Thirring, Bound for the Kinetic Energy of Fermions which
Proves the Stability of Matter, {\it Phys. Rev. Lett.} {\bf 35}, 687-689 (1975).




\bibitem{Nam-12} P.~T. Nam, New bounds on the maximum ionization of atoms, {\it
  Commun. Math. Phys.}, Vol.~312, no.~2, pp.~427--445, 2012.

\bibitem{NPS-12} P.~T. Nam, F.~Portmann, and J.~P. Solovej, Asymptotics for two-dimensional
  atoms, {\it Annales Henri Poincar\'e}, Vol.~13, no.~2, pp.~333--362, 2012.


\bibitem{Ru82} M.B. Ruskai, Absence of discrete spectrum in highly negative ions, II. Extension to Fermions, {\it Commun. Math. Phys.} {\bf 82}, 325--327 (1982).

\bibitem{SSS90} L. A. Seco, I. M. Sigal, and J. P. Solovej, Bound on the ionization energy of large atoms, {\it Commun. Math. Phys.} {\bf 131}, 307-315 (1990).

\bibitem{Se01} R. Seiringer, On the maximal ionization of atoms in strong magnetic fields, {\it J.Phys. A: Math. Gen.} {\bf 34}, 1943-1948 (2001).

\bibitem{Si82} I.M. Sigal, Geometric methods in the quantum many-body problem. Nonexistence
of very negative ions, {\it Commun. Math. Phys.} {\bf 85}, 309-324 (1982).

\bibitem{Si84} I.M. Sigal, How many electrons can a nucleus bind? {\it Ann. Phys.} {\bf 157}, 307-320 (1984).

\bibitem{Si00} B. Simon, Schr\"odinger operators in the twenty-first century, in {\it Mathematical Physics 2000} (A. Fokas, A. Grigoryan, T. Kibble, and B. Zegarlinski, eds.), Imperial College Press, 283-288 (2000).

\bibitem{So90} J.P. Solovej, Asymptotics for bosonic atoms, {\it Lett. Math. Phys.}, {\bf 20}, 165-172 (1990).

\bibitem{So91} J. P. Solovej, Proof of the ionization conjecture in a reduced Hartree-Fock model, {\it Invent. Math.} {\bf 104}, 291-311 (1991).

\bibitem{So03} J.P. Solovej, The ionization conjecture in Hartree-Fock theory. {\it Ann. of Math.} {\bf 158}, 509-576 (2003).



\bibitem{Wi64} C. van Winter, Theory of finite systems of particles I. The Green function, {\it Mat. Fys. Dan. Vid. Selsk.} {\bf 2}, no.8, 1-60 (1964).

\bibitem{Zh60} G. Zhislin, Discussion of the spectrum of Schr\"odinger operator for system of many particles, {\it Trudy. Mosk. Mat. Ob\v s\v c.} {\bf 9}, 81-120 (1960).

\end{thebibliography}

\end{document}